\documentclass[aps, pra, nofootinbib, twocolumn, 11pt, tightenlines,
notitlepage, superscriptaddress, longbibliography]{revtex4-1}

\usepackage{amsmath}
\usepackage{amsthm}
\usepackage{amssymb}
\usepackage{amsfonts}
\usepackage{mathrsfs}
\usepackage{pgfplots}
\usepackage{xparse} 
\usepackage{graphicx}
\usepackage{color}
\usepackage{hyperref}
\usepackage{qcircuit}
\usepackage[normalem]{ulem}

\hypersetup{
	colorlinks,
	linkcolor={blue},
	citecolor={blue},
	urlcolor={blue}
}

\usepackage{cleveref}

% commands to do circuits with time flowing from right to left

% math operators

\newcommand{\tV}{\vert\kern-0.25ex\vert\kern-0.25ex\vert}

\newcommand{\ct}{\ensuremath{^{\dagger}}}
\newcommand{\mc}{\mathcal}
\newcommand{\bb}{\mathbb}

\DeclareMathOperator{\tr}{Tr}

% math commands with arguments

% theorem environments
\theoremstyle{plain}% default
\newtheorem{thm}{Theorem}
\newtheorem{lem}[thm]{Lemma}

\theoremstyle{definition}

\theoremstyle{remark}

\definecolor{nblue}{rgb}{0.2,0.2,0.7}
\definecolor{ngreen}{rgb}{0.1,0.5,0.1}%272
\definecolor{nred}{rgb}{0.8,0.2,0.2}%711
\definecolor{nblack}{rgb}{0,0,0}

\newcommand{\hide}[1]{}

\newcommand{\ini}{\ensuremath{_\mathrm{in}}}
\newcommand{\out}{\ensuremath{_\mathrm{out}}}

\hypersetup{pdftitle={Coherence in quantum error-correcting codes},
pdfauthor={Joel J. Wallman}}% add a title to the metadata

\begin{document}

\title{Quantum error correction decoheres noise}
\author{Stefanie J. Beale}
\email{sbeale@uwaterloo.ca}
\affiliation{ Institute for Quantum Computing and Department of Physics and
Astronomy, University of Waterloo, Waterloo, Ontario, N2L 3G1, Canada}

\author{Joel J. Wallman}
\email{jwallman@uwaterloo.ca}
\affiliation{ Institute
for Quantum Computing and Department of Applied Mathematics, University of Waterloo, Waterloo,
Ontario, N2L 3G1, Canada}

\author{Mauricio Guti\'errez}
\affiliation{ Department of Physics, College of Science, Swansea University, Singleton Park, Swansea SA2 8PP, United Kingdom}
\affiliation{Escuela de Qu\'imica, Universidad de Costa Rica, San Jos\'e, 2060 Costa Rica}

\author{Kenneth R. Brown}
\affiliation{Department of Electrical and Computer Engineering, Chemistry, and Physics, Duke University, Durham, NC 27708}

\author{Raymond Laflamme}
\affiliation{ Institute for Quantum Computing and Department of Physics and
	Astronomy, University of Waterloo, Waterloo, Ontario, N2L 3G1, Canada}
\affiliation{Perimeter Institute for Theoretical Physics, Waterloo N2L 2Y5 ON, Canada}

\begin{abstract}
Typical studies of quantum error correction assume probabilistic Pauli noise, largely because it is relatively easy to
analyze and simulate. Consequently, the effective logical noise due to
physically realistic coherent errors is relatively unknown. Here, we prove that encoding a system in a stabilizer code and measuring error syndromes decoheres errors, that is, causes coherent errors to converge toward probabilistic Pauli errors, even when no recovery operations are applied. Two practical consequences
are that the error rate in a logical circuit is well quantified by the average
gate fidelity at the logical level and that essentially optimal recovery
operators can be determined by independently optimizing the logical fidelity of
the effective noise per syndrome.
\end{abstract}

\pacs{03.67.Pp}

\maketitle

\section{Introduction}

Quantum computers are likely to dramatically outperform classical computers,
provided that errors can be corrected enough to make the output reliable. Errors
in a quantum computer can take many forms with differing impacts on an
error-correction procedure. Most studies of the performance of quantum
error-correcting codes only consider probabilistic Pauli errors because they are
easy to simulate via the Gottesman-Knill theorem~\cite{Aaronson2004}. However,
in real systems, it is likely that other noise will also be present.

Determining the performance of an error-correcting code at the logical level
under general noise is complicated because such noise is harder to simulate.
Previous approaches have expanded the class of errors to some larger class that
can still be efficiently simulated~\cite{Gutierrez2015}, performed full
density-matrix simulations~\cite{Gutierrez2016}, used tensor network
descriptions of specific codes~\cite{Darmawan2017,Bravyi2017} or effective
logical process matrices~\cite{Rahn2002, Fern2006, Chamberland2017}. These
methods are suboptimal because they either require a huge amount of resources
to simulate or are indirect approximations. They also do not easily give
structural insight because extrapolating the effective logical noise from the
description of the encoded state is difficult and determining the scaling with
parameters of interest typically requires extensive recalculations.

Optimistically, one may hope that a (numerical or analytical) estimate of the
infidelity of the logical noise under a probabilistic Pauli channel generalizes
directly to general logical noise. However, even quantifying the error becomes
more complicated for more general noise. The ``error rate'' due to a noise
process $\mc N$ acting on a $m$-level system is often experimentally quantified
via the average gate infidelity to the identity (hereafter the infidelity)
\begin{align} r(\mc N) = 1- \int {\rm d}\psi
\langle\psi|\mc N(|\psi\rangle\!\langle\psi|)|\psi\rangle
\end{align}
because it can be efficiently estimated via randomized
benchmarking~\cite{Emerson2005, Emerson2007, Dankert2009, Knill2008,
Magesan2011}. However, theoreticians often report rigorous bounds on the
performance of a quantum error-correcting code or a circuit in terms of the
diamond distance to the identity (hereafter the diamond
distance)~\cite{Kitaev1997}
\begin{align}\label{eq:diamond_def}
\epsilon(\mc N)
= \sup_{\psi} \tfrac{1}{2}\| \left[\mc
N\otimes\mc I_m-\mc I_{m^2}\right]\!(\psi)\|_1
\end{align}  where $\|A\|_1 =
\sqrt{\textrm{Tr} A^{\dagger} A}$ and the maximization is over all
$m^2$-dimensional pure states (to account for the error introduced when acting
on entangled states).

The infidelity and diamond distance are related via the bounds~\cite{Beigi2011,
Wallman2014}
\begin{align}\label{eq:fidelity_to_worst} r(\mc N) (1+m^{-1}) \leq
\epsilon(\mc N) \leq \sqrt{m(m+1)r(\mc N)}.
\end{align}
which scale optimally with respect to $r$ and $m$~\cite{Sanders2015}. For unitary noise, $\epsilon(\mc N)$ scales as $\sqrt{r(\mc N)}$, though it does not necessarily saturate the upper bound of \cref{eq:fidelity_to_worst}; this scaling follows from the magnitude of the coherent (non-Pauli) part of the noise~\cite{Kueng2016}. Pauli
noise saturates the lower bound of \cref{eq:fidelity_to_worst} and the effect
of coherent noise is often assumed to be negligible, so that experimental
infidelities are often compared to diamond distance targets to determine
whether fault tolerance is possible~\cite{Sanders2015}. However, even if
coherent errors make a negligible contribution to the infidelity, they can
dominate the diamond norm~\cite{Wallman2016a}. Because of this uncertainty
about how to quantify errors effectively, it is unclear what figure of merit
recovery operations should optimize and how to quantify the logical error rate~\cite{Gutierrez2016,Chamberland2017,Pavithran2017}.

Previous studies have shown that the contribution to the logical noise from
the coherent part of the physical noise decays exponentially as a function of
code distance~\cite{Fern2006}, although the decay rate was only given as an abstract property of the noise map. Recently, the decay rate was analyzed for specific noise models in
the repetition code~\cite{Greenbaum2018}.

In this paper, we directly relate the decay rate of coherent terms at the logical level of a general
stabilizer code to the infidelity of the physical noise of a general local
noise process, which can be estimated by randomized benchmarking. Further, we
give physical motivation for the decoherence of errors with increasing code
distance by relating the scaling of errors to projective syndrome
measurements. We demonstrate that—even without applying recovery operations—encoding a system in a quantum error-correcting code and measuring error syndromes decoheres errors, that is, causes rapid convergence toward probabilistic Pauli errors. To isolate the contribution from local noise, we assume that there is no other contributing noise. That is, encoding, syndrome measurements, recovery operations, and decoding are all assumed to be noiseless.

Our results show that the effective logical noise is well characterized by the logical infidelity. This provides a rigorous justification for choosing recovery maps to independently optimize the logical fidelity per syndrome (instead of, for example, optimizing the diamond norm of the logical noise averaged over all syndromes). Complementary results on the scaling of the diamond distance with quantum error correction protocols were independently obtained in ref. \cite{Huang2018}.

The paper is structured as follows. We first introduce Markovian noise processes
and review the process matrix formalism, a convenient representation of quantum
channels (not to be confused with the $\chi$ matrix representation). We then give an expression for the infidelity in terms of this
representation and discuss the implications and bounds on the entries of a
process matrix in terms of its infidelity. Next, we introduce stabilizer codes
and, using the aforementioned bounds, discuss the behavior of the effective
logical noise of an encoded state after syndrome measurements with and without
the application of recovery operations in terms of the physical infidelity of
the qubits. We conclude by discussing some implications of our work and discuss
how our results relate to existing results showing coherent errors at the
logical level.

\section{Markovian Noise Processes}

We represent quantum states and measurements of a $m$-dimensional system by
vectors as follows. Let $\{e_j:j\in\bb Z_m\}$ be the canonical basis of
$\bb C^{m^2}$ and $\bb B$ be an arbitrary trace-orthonormal basis of
$\bb C^{m\times m}$ respectively, that is, $\tr (B_j\ct B_k) = \delta_{j,k}$ for
all $B_j,B_k\in \bb B$. We will generally choose $\bb B$ to be the set of
normalized (physical or logical) Pauli operators, $\bb P =
\{I_2,X,Y,Z\}/\sqrt{2}$, or tensor products thereof. We define a map
$|.\rangle\!\rangle:\bb C^{m\times m}\to\bb C^{m^2}$ by setting $|B_j\rangle\!\rangle\to e_j$ for all
$B_j\in\bb B$ and extending to a linear map, so that
\begin{align}\label{eq:state}
|M\rangle\!\rangle &= \sum_j \tr(B_j\ct M) e_j.
\end{align}
Defining $\langle\!\langle M| = |M\rangle\!\rangle\ct$, we have
\begin{align}
\langle\!\langle M|N\rangle\!\rangle = \tr(M\ct N).
\end{align}

A Markovian noise process is a linear map $\mc N$ that maps valid quantum states
of one system to valid quantum states of another system, and so is completely
positive and trace preserving (CPTP). Let $\bb B\ini$ and $\bb B\out$ be
trace-orthonormal bases for the input and output systems respectively. Then
\begin{align}\label{eq:processMatrix}
|\mc N(M)\rangle\!\rangle &= \sum_{B\in\bb B\ini} |\mc N(B)\rangle\!\rangle\!\langle\!\langle B|M\rangle\!\rangle \notag\\
&= \mc N|M\rangle\!\rangle,
\end{align}
where we abuse notation slightly by using $\mc N$ to denote both an abstract map
and its matrix representation $\sum_{B\in\bb B\ini} |\mc N(B)\rangle\!\rangle\!\langle\!\langle B|$. Note
that $|\mc N(B)\rangle\!\rangle$ is a state of the output system and so is expanded
relative to $\bb B\out$ via \cref{eq:state}. The composition of two channels is
then given by the standard matrix product of the process matrices.

The average infidelity of a single-qubit noise process $\mc N$ with the identity in terms of
process matrices is~\cite{Kimmel2014}
\begin{align}
r=\frac{\textrm{Tr}[\mc I-\mc N]}{6}. \label{eq:procinfidel}
\end{align}
The infidelity only captures the effects of the Pauli part of the noise, that
is, the diagonal part, whereas the disconnect between the infidelity and the
diamond norm in \cref{eq:fidelity_to_worst} for non-Pauli noise is due to the
off diagonal terms, which we call the coherent part of the noise.

Setting $B_0 = I_2/\sqrt{2}$ and defining the single-qubit error matrix
$E\equiv \lvert I_4-\mc N\rvert$, we have the following bounds on the matrix
entries $E_{\sigma,\tau} = \langle\!\langle \sigma|E|\tau\rangle\!\rangle$ of $E$ in terms of the
infidelity.

\begin{lem}\label{lem:element_bounds} For any single-qubit Markovian noise process with
infidelity $r$,
\begin{subequations}

\begin{align}
E_{\sigma_0,\sigma} &= 0 \label{eq:TP}\\
E_{\sigma,\sigma_0} &\leq 3r \label{eq:nonUnital}\\
E_{\sigma,\sigma} &\leq 3r \label{eq:diagonal}\\
E_{\sigma,\tau} &\leq \sqrt{6r} \label{eq:offDiagonal}
\end{align}
\end{subequations}
for all $\sigma,\tau\in\vec{\sigma} = {I,X,Y,Z}/\sqrt{2}$.
\end{lem}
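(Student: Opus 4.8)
The plan is to handle the four claims in order of how much positivity they require: \cref{eq:TP} needs only trace preservation, \cref{eq:diagonal} and \cref{eq:nonUnital} need positivity together with nonnegativity of the Choi matrix, and \cref{eq:offDiagonal} needs the full (completely positive) Gram structure of the process matrix plus a Cauchy--Schwarz estimate. Throughout I would use that, by \cref{eq:procinfidel}, $6r=\tr[\mc I-\mc N]=\sum_\sigma(1-\mc N_{\sigma\sigma})$ with the sum over the normalized Paulis and $\mc N_{\sigma\tau}=\laa\sigma|\mc N|\tau\raa$.

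First, \cref{eq:TP}. Trace preservation gives $\tr\mc N(\tau)=\tr\tau$, so $\mc N_{\sigma_0,\tau}=\tr(\sigma_0\ct\mc N(\tau))=\tfrac{1}{\sqrt2}\tr\tau$, which equals $1$ for $\tau=\sigma_0$ and $0$ otherwise; this is exactly the $\sigma_0$ row of $I_4$, so $E_{\sigma_0,\sigma}=0$. Since $\mc N_{\sigma_0,\sigma_0}=1$, the $\sigma_0$ term drops from the trace identity and $6r=\sum_{\sigma\neq\sigma_0}(1-\mc N_{\sigma\sigma})$. Next, for \cref{eq:diagonal} I would invoke complete positivity through the Choi matrix, whose four diagonal entries in the Bell basis are $F_P=\tfrac14(1\pm\mc N_{XX}\pm\mc N_{YY}\pm\mc N_{ZZ})$, with all signs $+$ for $P=I$ and a $+$ only on $\mc N_{PP}$ otherwise. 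These are nonnegative and sum to $1$; nonnegativity of $F_X$ gives $\mc N_{YY}+\mc N_{ZZ}\le 1+\mc N_{XX}$, i.e. $E_{XX}\le E_{YY}+E_{ZZ}$, and cyclically. Together with $E_{XX}+E_{YY}+E_{ZZ}=6r$ this gives $2E_{\sigma\sigma}\le 6r$, which is \cref{eq:diagonal}; the same inequalities give $\mc N_{\sigma\sigma}\le1$, so $E_{\sigma\sigma}=1-\mc N_{\sigma\sigma}\ge0$.

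For \cref{eq:nonUnital} I would use only positivity: $\mc N$ must send each $\pm1$ eigenstate of a Pauli to a valid state, whose Bloch vector has length at most $1$, so every component has modulus at most $1$. Reading off the component along that Pauli for the two eigenstates gives $|\mc N_{\sigma\sigma}+\mc N_{\sigma,\sigma_0}|\le1$ and $|\mc N_{\sigma\sigma}-\mc N_{\sigma,\sigma_0}|\le1$, hence $|\mc N_{\sigma,\sigma_0}|\le 1-\mc N_{\sigma\sigma}=E_{\sigma\sigma}$, which is at most $3r$ by \cref{eq:diagonal}.

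The main obstacle is \cref{eq:offDiagonal}, the genuinely coherent entries. Here I would expand $\mc N=\sum_k A_k\cdot A_k\ct$ in Kraus form, write $A_k=\sum_\mu\chi_{k\mu}P_\mu$ in the unnormalized Pauli basis, and note that the infidelity fixes the identity weight $\chi_{II}\equiv\sum_k|\chi_{kI}|^2=1-\tfrac{3r}{2}$ (this is the entanglement fidelity), while trace preservation gives $\sum_{k,\mu}|\chi_{k\mu}|^2=1$, so the off-identity weight is $\sum_{\mu\neq I}\chi_{\mu\mu}=\tfrac{3r}{2}$. Each off-diagonal entry $\mc N_{\sigma\tau}$ then expands as a bilinear form $\sum_k\chi_{k\mu}\bar\chi_{k\nu}$ over the four index pairs $(\mu,\nu)$ with $P_\sigma P_\mu P_\tau P_\nu\propto I$, with coefficients of modulus $1$; these pairs couple the Pauli set $\{I,\rho\}$ with $\{\sigma,\tau\}$, where $P_\rho\propto P_\sigma P_\tau$. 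The crux is to apply Cauchy--Schwarz to the two Kraus-coefficient vectors $(\chi_{kI},\chi_{k\sigma})$ and $(\chi_{k\rho},\chi_{k\tau})$, which yields $|\mc N_{\sigma\tau}|\le 2\sqrt{(\chi_{II}+\chi_{\sigma\sigma})(\chi_{\rho\rho}+\chi_{\tau\tau})}$. The first factor is at most $1$ because it omits two of the four diagonal weights that sum to $1$, and the second is purely off-identity and hence at most $\tfrac{3r}{2}$, giving $|\mc N_{\sigma\tau}|\le 2\sqrt{3r/2}=\sqrt{6r}$. I expect the delicate point to be verifying that the grouping always pairs the identity with exactly one of $\sigma,\tau$, so the ``large'' weight $\chi_{II}$ is never paired against itself; a single rotation about a Pauli axis, where $\mc N_{\sigma\tau}=-2\,\mathrm{Im}\,\chi_{I\rho}$ saturates the bound to leading order, is the guiding example.
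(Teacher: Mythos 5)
Your proposal is correct, but it proves the lemma by a genuinely different, self-contained route: where the paper leans on the literature, you rederive everything from first principles. For \cref{eq:TP} you and the paper coincide (pure trace preservation). For \cref{eq:nonUnital} the paper cites Prop.~12 of Wallman--Flammia, whereas you derive it from positivity alone (both Pauli eigenstates must map into the Bloch ball, giving $\lv\mc N_{\sigma\sigma}\rv + \lv\mc N_{\sigma,\sigma_0}\rv \le 1$); your argument in fact yields the slightly stronger statement $E_{\sigma,\sigma_0}\le E_{\sigma,\sigma}$, which is tight for amplitude damping. For \cref{eq:diagonal} the paper invokes the Ruskai--King CPTP conditions $(a_\sigma-a_\tau)^2\le a_\nu^2$ for the Pauli-twirled channel; your nonnegativity of the Bell-basis Choi diagonals $F_P$ is mathematically the same constraint (the $F_P$ are exactly the Pauli error probabilities of the twirl), so here the content is identical and only the packaging differs. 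The real divergence is \cref{eq:offDiagonal}: the paper uses Ruskai's result that every column of the unital block has Euclidean norm at most $1$, so $\mc N_{\sigma,\tau}^2 \le 1-\mc N_{\tau,\tau}^2 \le 1-(1-3r)^2 = 6r-9r^2$, whereas you work in the $\chi$-matrix picture and apply Cauchy--Schwarz to the Kraus-coefficient vectors $(\chi_{kI},\chi_{k\sigma})$ and $(\chi_{k\rho},\chi_{k\tau})$; I checked the Pauli algebra (the only contributing index pairs are $(I,\rho),(\rho,I),(\sigma,\tau),(\tau,\sigma)$, each with a modulus-one coefficient, and $\chi_{II}=1-3r/2$ is the entanglement fidelity), and your chain $\lv \mc N_{\sigma,\tau}\rv \le 2\sqrt{(\chi_{II}+\chi_{\sigma\sigma})(\chi_{\rho\rho}+\chi_{\tau\tau})} \le 2\sqrt{3r/2}=\sqrt{6r}$ is sound. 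What each buys: the paper's route is shorter and gives the marginally tighter $\sqrt{6r-9r^2}$ (it then discards the $r^2$ term); yours needs no external results, cleanly separates what follows from positivity versus complete positivity, and its intermediate bound makes transparent \emph{why} unitary rotations saturate $\sqrt{6r}$ to leading order, which the column-norm argument obscures. One phrasing quibble: the identity index is grouped with $\sigma$ in your first vector but is \emph{paired} (in the resulting inner product) with $\rho$, not with ``one of $\sigma,\tau$''; the property you actually need --- that $\chi_{II}$ appears in only one of the two Cauchy--Schwarz factors --- does hold in your grouping.
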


\begin{proof}
\Cref{eq:TP} follows directly from the trace-preserving condition.
\Cref{eq:nonUnital}  was proven in \cite[Prop. 12]{Wallman2014}.
To prove \cref{eq:diagonal}, note that the Pauli twirl of $\mc N$,

\begin{align}
\frac{1}{4}\sum_{P\in\{I,X,Y,Z\}} \mc P\mc N\mc P
\end{align}
where $\mc P$ denotes the channel that acts via conjugation by $P$, is a valid
channel whose process matrix is the diagonal part of $\mc N$ whose singular values
are consequently the diagonal entries. We can then write $E_{\sigma,\sigma} =
a_\sigma r$ ~\cite{Ruskai2002} where the $a_\sigma$ must satisfy

\begin{align}
(a_\sigma - a_\tau)^2 \leq a_\nu^2
\end{align}
for all permutations $\{\sigma,\tau,\nu\}$ of $\sigma\backslash\{\sigma_0\}$ in
order for the map to be CPTP~\cite[eq. (63)]{Wallman2014} and must add to $6$,
by \cref{eq:procinfidel}, as $\mc N$ has infidelity $r$.

\Cref{eq:offDiagonal} holds as the Euclidean norm of any column of $\mc N_{\rm u}$
is upper-bounded by 1 where $\mc N_{\rm u}$ is the unital block obtained by
deleting the first row and column of $\mc N$~\cite{Ruskai2002}. Note that the term in
the square root was only kept to $\mc O{(r)}$; an $r^2$ term was dropped, reducing
the inequality from $E_{\sigma,\tau} \leq \sqrt{6r-9r^2}$. This convention will
be followed for the remainder of the paper. This bound can be tightened further
by considering unitarity~\cite{Wallman2015}.
\end{proof}

\section{Stabilizer Codes}

We now review stabilizer codes; for more details, see, for example,
Ref.~\cite{Gottesman2010}. Let $[A, B] = AB - BA$ and $\{A,B\} = AB + BA$. An
$n$-qubit Pauli operator $P$ is the tensor product of $n$ single-qubit Pauli
operators, and the weight $w(P)$ of a Pauli operator $P$ is the number of qubits
$P$ acts on nontrivially. An $[\![n,k,d]\!]$ stabilizer code encodes $k$ logical qubits in $n$ physical qubits and is distance $d$; it is defined by an Abelian group
$\bb S\not\ni -I$ of $2^{n-k}$ $n$-qubit Pauli operators, which can be described
by a set of generators $g_1,\ldots,g_{n-k}$. We can define a set of $2^{n-k}$
mutually orthogonal projectors
\begin{align}\label{eq:projectors}
 \Pi_s = \prod_{j=1}^{n-k} \frac{1}{2}(I+(-1)^{s_j} g_j),
\end{align}
where $s_j$ is the $j$th entry of the syndrome, $s$, and the code space is the support of
$\Pi_0$. An error is detectable if it maps the support of $\Pi_0$ outside of
$\Pi_0$ and has no effect if it acts trivially on $\Pi_0$, that is, if it is in
$\bb S$. The distance of the code is the minimal Pauli weight of an undetectable
error that acts nontrivially on $\Pi_0$. For each error syndrome
$s\in\bb Z_2^{n-k}$ we can find a Pauli operator $R_s$ satisfying $R_s \Pi_s R_s
= \Pi_0$ which corrects the error.

We can find a set of operators $\{\overline{X}_j, \overline{Z}_j: j=1, \ldots,
k\}$ such that for all $S\in\bb S$ and $j\neq k$,
\begin{align}
[\overline{X}_j,S] &= [\overline{Z}_j, S] =0 \notag\\ [\overline{X}_j,
\overline{X}_k] &= [\overline{X}_j, \overline{Z}_k] = [\overline{Z}_j,
\overline{Z}_k] = 0 \notag\\ \overline{X}_j \overline{Z}_j &= - \overline{Z}_j
\overline{X}_j.
\end{align}
Let $\bb L$ be the projective group generated by
$\{\overline{X}_j, \overline{Z}_j: j=1, \ldots, k\}$.
Then $2^{-k/2}\bb L\Pi_0$ is a trace-orthonormal set of operators that span the code space.
Therefore any operator $\overline{\rho}$ in the code space can be written as
\begin{align} \overline{\rho} = 2^{-k} \sum_{L\in\bb L} \tr(L\Pi_0 \overline{\rho}) L
\Pi_0.
\end{align}

\section{Effective Noise Under Error Correction}

We now prove that, even with bad decoders (or no correction), encoding in an
error correcting code decoheres local errors.

For ideal encoding and correction operations, preparing an initial state in the
code space, applying a general local $n$-qubit noise process $\mc N=\mc N^{(1)}\otimes\mc N^{(2)}\otimes...\otimes\mc N^{(n)}$, and performing a syndrome
measurement with the outcome $s$ maps the system from the support of $\Pi_0$ to
that of $\Pi_s$. Let $p(s)$ be the probability of observing the syndrome $s$,
which will generally depend upon the input state. Then by
\cref{eq:processMatrix} the effective noise map from $\Pi_0$ to $\Pi_s$ is
\begin{align}\label{eq:average_process_matrix}
\overline{\mc N}(s)_{L,L'}&=
\frac{\langle\!\langle L\Pi_s|\mc N|L'\Pi_0\rangle\!\rangle}{p(s) 2^k} ,
\end{align}
where the factor of $2^{-k}$ comes from the normalization of $\bb L\Pi_s$~\cite{Rahn2002}. Note
that it is conventional to apply a ``pure error''~\cite{Poulin2006} to map back
to the code space. We omit this step to highlight the fact that syndrome
measurements alone decohere the noise.

\begin{thm}\label{thm:off_diagonal_logical}
For any $[\![n,k,d]\!]$ stabilizer code, the average off diagonal elements of
the logical noise under a local noise process $\mc N = \bigotimes_{j=1}^n \mc N^{(j)}$
scales as
\begin{align}
\sum_s p(s)\overline{\mc N}(s)_{L,L'} \in \mc O(r^{d/2}) \mbox{ as }
r\to 0
\end{align}
where $r = \max_j r(\mc N^{(j)})$.
\end{thm}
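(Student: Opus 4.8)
The plan is to show that the syndrome average collapses the effective-noise expression of \cref{eq:average_process_matrix} to a single sum over the stabilizer group, and then to bound that sum term by term using \cref{lem:element_bounds} together with the code distance. Throughout I take the off-diagonal case $L\neq L'$ and write $\widetilde P = P/\sqrt{2^n}$ for the normalized $n$-qubit Pauli associated with a Pauli $P$, so that the process-matrix entries $\mc N_{\widetilde P,\widetilde Q}$ are indexed exactly as in \cref{lem:element_bounds}.

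First I would carry out the collapse. The weight $p(s)$ cancels against the denominator in \cref{eq:average_process_matrix}, giving $\sum_s p(s)\overline{\mc N}(s)_{L,L'} = 2^{-k}\laa \sum_s L\Pi_s|\mc N|L'\Pi_0\raa$. Since the projectors form a complete set, $\sum_s \Pi_s = I$, the row operator collapses, $\sum_s L\Pi_s = L$, and by linearity $\sum_s\laa L\Pi_s| = \laa L|$. Expanding $\Pi_0 = 2^{-(n-k)}\sum_{S\in\bb S}S$ and balancing the normalization factors $2^{\pm n/2}$ and $2^{\pm(n-k)/2}$, I expect this to reduce to the identity $\sum_s p(s)\overline{\mc N}(s)_{L,L'} = \sum_{S\in\bb S}\mc N_{\widetilde L,\widetilde{L'S}}$. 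Thus the row index is pinned to the single Pauli $\widetilde L$ while the column index sweeps the entire logical coset $L'\bb S$.

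Next I would bound each term. For the local process the matrix element factorizes, $\mc N_{\widetilde L,\widetilde M} = \prod_{j=1}^n \mc N^{(j)}_{\tilde L_j,\tilde M_j}$ with $M=L'S$. On each qubit where $L$ and $M$ agree, \cref{eq:diagonal} gives $|\mc N^{(j)}_{\tilde L_j,\tilde L_j}|\le 1+3r$; on each qubit where they differ, the factor is an off-diagonal error-matrix entry bounded by \cref{eq:offDiagonal} as $|\mc N^{(j)}_{\tilde L_j,\tilde M_j}|\le\sqrt{6r}$, a bound that also subsumes the trace-preservation and non-unital cases of \cref{lem:element_bounds} since $0\le 3r\le\sqrt{6r}$ for small $r$. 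Hence $|\mc N_{\widetilde L,\widetilde M}|\le(1+3r)^n(\sqrt{6r})^{w(LM)}$, the exponent being exactly the number of qubits on which $L$ and $M$ differ. The key step is then to recognize that $LM=LL'S$ commutes with every element of $\bb S$ yet lies outside $\bb S$ (since $L\neq L'$), so it is a nontrivial undetectable operator and therefore has weight $w(LM)\ge d$ for every $S\in\bb S$. This gives $|\mc N_{\widetilde L,\widetilde{L'S}}|\le(1+3r)^n(\sqrt{6r})^d$ for each of the $2^{n-k}$ terms, and summing yields $\bigl|\sum_s p(s)\overline{\mc N}(s)_{L,L'}\bigr|\le 2^{n-k}(1+3r)^n(\sqrt{6r})^d\in\mc O(r^{d/2})$.

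I expect the main obstacle to be the collapse in the second paragraph: establishing rigorously that the syndrome average pins the row index to the single Pauli $\widetilde L$ while leaving the column spread over the full coset $L'\bb S$, which is where the cancellation of $p(s)$, the completeness relation $\sum_s\Pi_s=I$, and the several factors of $2^{\pm(n-k)/2}$ and $2^{\pm n/2}$ must all be handled consistently. Once this reduction is secured, the conceptual heart—that the column index ranges over a nontrivial logical coset, so every term differs from the row on at least $d$ qubits—makes the distance $d$ appear immediately, and the per-qubit infidelity bounds of \cref{lem:element_bounds} finish the estimate.
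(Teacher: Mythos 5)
Your proof is correct and follows essentially the same route as the paper's: expand the projector(s) in stabilizer elements, factorize the matrix element over qubits, observe that the row and column Paulis differ on at least $d$ qubits because $LL'S$ is a nontrivial undetectable operator, and apply the $\sqrt{6r}$ off-diagonal bound of \cref{lem:element_bounds} on each differing qubit. The only difference is organizational: you perform the syndrome sum first (via $\sum_s \Pi_s = I$), collapsing the paper's double sum over $\bb S\times\bb S$ to a single sum over the coset $L'\bb S$ and eliminating the $1/p(s)$ factor at the outset rather than at the end, which is a slightly cleaner execution of the same idea.
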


\begin{proof}
By \cref{eq:projectors}, \cref{eq:average_process_matrix} can be rewritten as
\begin{align}\label{eq:exact}
\overline{\mc N}(s)_{L,L'}&=
\sum_{S,S'\in\bb S}\frac{\phi(S|s) \langle\!\langle L S|\mc N|L'S'\rangle\!\rangle}{p(s) 2^{2n-k}} ,
\end{align}
where $\phi(S|s)$ is the sign of $S$ in the expansion of \cref{eq:projectors}.
As $\mc N$ and the stabilizers are all tensor products, terms of the form $\langle\!\langle L S|\mc N|L'S'\rangle\!\rangle$ can be factorized.
However, this introduces a subtlety as $LS$ may be a phase multiple of an element of $\{I,X,Y,Z\}^{\otimes n}$, which needs to be accounted for when factoring the tensor product.
Let $\chi(A)\in\{\pm, \pm i\}$ be the phase multiple of $A$ relative to its representative element $A'$ in the projective Pauli group $\{I,X,Y,Z\}^{\otimes n}$ so that $A = \chi(A)A'$.
Note that we can ignore the $\pm i$ case as all operators under consideration are Hermitian.
Then, using $\mc N_{P,Q} = \langle\!\langle P|\mc N^{(j)}|Q\rangle\!\rangle/2$ for $P,Q\in\{I,X,Y,Z\}$,
\begin{align}\label{eq:exact2}
\overline{\mc N}(s)_{L,L'} &= \sum_{S,S'\in\bb S}\frac{\phi(S|s)\chi(LS)\chi(L'S')}{p(s) 2^{n-k}} \prod_{j=1}^n
\mc N^{(j)}_{L_jS_j, L'_jS'_j}.
\end{align}
By the definition of the code distance, $SL$ and $S'L'$ differ
on at least $d$ qubits for $S\in \bb S L$, $S'\in\bb S L'$ and $L\neq L'$.
Therefore for any $L\neq L'$, each term on the right-hand side of
\cref{eq:exact2} is in $\mc O(r^{d/2})$ by \cref{lem:element_bounds} after
syndrome measurements. Averaging over the syndromes cancels the $p(s)$ in the
denominator.
\end{proof}

Intuitively, syndrome measurements decohere errors because the act of measuring
projects out any Pauli in the expansion of the output state that is not of the
form $LS$, thus removing the components of the output state corresponding to the additional Pauli operators introduced by coherent noise.

In \cref{thm:off_diagonal_logical}, we proved that any errors are suppressed
exponentially with the code distance. To conclude that the noise is
decohered, we need to show that the off diagonals of the logical error matrix,
$E$, do not scale as the square root of the diagonals, so that the ratio of the off diagonals to diagonals decreases with code distance (ie the ratio of the off diagonal elements to the diagonal elements of the logical noise is less than the corresponding ratio for the physical noise). To see that this holds,
at least for typical noise in nondegenerate stabilizer codes, note that
\cref{eq:exact} is linear in $\mc N$. Writing $\mc N = \sum_{x\subset \mathbb{Z}_n}
E(x)$ where $E(x)$ is an error that only acts nontrivially on qubits in $x$ and
$E(\emptyset) = I$,
\begin{align}
\overline{\mc N}(s)_{L,L'}=
\sum_{S,S'\in\bb S, x\subset\bb{Z}_n}
&\frac{\phi(S|s)\chi(SL)\chi(S'L')}{p(s) 2^{n-k}}\notag\\
 &\times\prod_{j\in x} E(x)^{(j)}_{L_jS_j, L'_jS'_j}.
\end{align}

For a nondegenerate distance $d$ stabilizer code, there exists some set $x$ of
at most $\lceil d/2 \rceil$ qubits such that $E(x)$ cannot be corrected, that is,
canceled out when averaged over syndromes.
This set contributes a term $\sum_{S\in\mathbb{S}}\prod_{j\in x} E(x)^{(j)}_{L_jS_j, L_jS_j}$. By reducing
the generators so that at most one generator acts nontrivially as $\sigma$ on
each $j\in x$ for each $\sigma\in\vec{\sigma}$, we can find some stabilizer such
that $L_j S_j\neq \sigma_0$ for all $j\in x$. Let
\begin{align}
r' = \min_{j,\sigma\in\vec{\sigma}} E(x)^{(j)}_{\sigma,\sigma},
\end{align}
which will be $\mc O(r)$ for typical noise. Then $x$ contributes a term that scales
as at least $r'^{|x|}$ to the effective logical error and so the
logical infidelity scales as ${r'}^{\lceil d/2 \rceil}$ or worse, so that the
off diagonals are, at worst, proportional to the diagonals of the logical error
matrix.

As $d$ increases, the scaling described above causes the effective logical
noise to become progressively less coherent so that the Pauli twirl approximation captures the logical noise more effectively. However, due to contributions from the coherent part of the physical noise to the Pauli part of the logical noise, approximating the physical noise as Pauli in order to calculate the logical noise produces inaccurate results as observed previously~\cite{Gutierrez2016,Greenbaum2018}. Ref. \cite{Greenbaum2018} demonstrated that the coherent contribution dominates the Pauli part of the logical noise after many rounds of error correction. We now apply our bounds on the scaling of errors to a more general analysis of error accumulation in a scheme with rounds of error correction. The effective logical noise after $h$ rounds of error correction is

\begin{align}
	(I-\overline{E})^h\approx I-h\overline{E}+{h\choose 2}\overline{E}^2,
\end{align}
where we have taken a binomial expansion to second order in $\overline{E}$. Assuming typical noise, the off diagonals of $\overline{E}$ scale at worst as $\mc O(r^{(d+1)/2})$, and the diagonals as $\mc O(r^{d/2})$. When the noise  is Pauli, the effective logical noise on the diagonal after $h$ rounds of error correction will be at worst

\begin{align}
	(I-\overline{E})^h_{\sigma,\sigma}\approx 1-\mc O(hr^{(d+1)/2})+\mc O(h^2r^{d+1}).
\end{align}

If coherent noise is present,

\begin{align}
(I-\overline{E})^h_{\sigma,\sigma}\approx 1-\mc O(hr^{(d+1)/2})+\mc O(h^2r^{d}).
\end{align}

Taking the ratio of the first and second order terms, quadratic errors start to accumulate from Pauli noise at $h_{P}\approx 1/r^{(d+1)/2}$ and from coherent noise at $h_c\approx 1/r^{(d-1)/2}$. The coherent noise begins to dominate the Pauli part of the effective logical noise occurs at $h_{crit}\approx 1/r$, independent of the code distance. This critical value is consistent with the value observed in ref. \cite{Greenbaum2018} of $1/\epsilon^2$, where $\epsilon$ is the angle of rotation about the x-axis, and we note that all of our observations hold in their specific case when we replace $r$ in our results with $\sqrt{\epsilon}$, as that is how the specified noise scales relative to our \cref{lem:element_bounds}. Because the off diagonal terms and diagonal terms produce the same scaling in a worst-case analysis with coherent noise, the ratio of off diagonal to diagonal errors is independent of the number of rounds of error correction in the worst-case scaling of typical noise.

\section{Conclusion}

In this paper, we have shown that for generic local noise, coherent errors are
decohered by syndrome measurements in error correcting stabilizer codes.
Consequently, error rates in logical circuits are well quantified by the logical
infidelity. Therefore it is appropriate to choose recovery operators to optimize
the logical fidelity, instead of other measures such as the diamond norm. This
dramatically simplifies the process of selecting recovery operators for general
noise because the fidelity is a linear function of quantum channels and so we
can optimize the fidelity of the logical noise for each syndrome independently,
as noted in \cite{Chamberland2017}. By contrast, if we tried to optimize the
diamond norm of the average logical noise, we would have to simultaneously
optimize all recovery operators.

While we have only explicitly considered independent errors, note that our
arguments apply directly to correlated errors of the form
\begin{align}
\mathcal{N} = \sum_\alpha p_\alpha \bigotimes_{j=1}^n \mathcal{N}^{(\alpha,j)}
\end{align}
by linearity. The only nontrivial issue is identifying a scaling parameter akin
to the single-qubit infidelity.

Previous results have demonstrated significant logical coherent
errors~\cite{Fern2006, Gutierrez2016}, namely, off diagonals that scale as
$r^{3/2}$ compared to diagonals that scale as $r^2$. However, these results were
all for distance 3 codes and are consistent with our results as for such codes,
$\lceil d/2\rceil = 2$ giving diagonals that scale as ${r'}^{2}$ and
off diagonals that scale as $r^{3/2}$ by \cref{thm:off_diagonal_logical}.
Numerically, significant discrepancies between the logical diamond norm error
with and without Pauli twirling (which removes the coherent part of the noise) at the physical level have been observed for high distance surface
codes~\cite{Darmawan2017} (up to distance 10).
These discrepancies have been interpreted as
suggesting significant logical coherent errors~\cite{Greenbaum2018}. Our
results show that these discrepancies are almost entirely due to contributions to the logical infidelity from the coherent part (ie off diagonals) of the physical noise\footnote{Recall that the logical infidelity depends only on the diagonal part of the process matrix so that these discrepancies are almost entirely due to contributions from the coherent part of the physical noise to the logical diagonals.}, though for a specific syndrome and noise model, the effective logical noise may appear coherent. That is, the effective logical noise is generically very close to a Pauli channel on average, however, it may not be the Pauli channel one would predict from the Pauli twirl of the physical noise.

\section{Acknowledgements}

This research was supported by the Canadian federal and Ontario provincial
governments through an NSERC CGS-M and an Ontario Graduate Scholarship. This
research was undertaken thanks in part to funding from TQT, CIFAR, the
Government of Ontario, and the Government of Canada through CFREF, NSERC and
Industry Canada. MG and KRB were supported by the ODNI-IARPA LogiQ program.

\bibliography{library}

\end{document}